\pdfoutput=1
\documentclass[journal]{IEEEtran}

\usepackage[T1]{fontenc}
\usepackage{amsmath,amssymb,amsthm}
\usepackage{booktabs}
\usepackage{graphicx}
\usepackage{cite}
\usepackage{microtype}
\usepackage[colorlinks=true,linkcolor=black,citecolor=black,urlcolor=blue]{hyperref}

\newtheorem{theorem}{Theorem}
\newtheorem{proposition}{Proposition}
\newtheorem{corollary}{Corollary}
\theoremstyle{definition}
\newtheorem{definition}{Definition}
\newtheorem{assumption}{Assumption}

\newcommand{\R}{\mathbb{R}}
\newcommand{\C}{\mathbb{C}}
\newcommand{\N}{\mathbb{N}}
\newcommand{\Sone}{\mathbb{S}^1}
\newcommand{\Spec}{\operatorname{Spec}}
\newcommand{\Int}{\operatorname{Int}}
\newcommand{\Rel}{\operatorname{Re}}
\newcommand{\Wc}{\mathcal{W}}

\begin{document}

\title{Memory in Behavioral Models as Motion on a Slow Invariant Manifold}

\author{Nicholas~B.~Tufillaro%
\thanks{N.~B.~Tufillaro is with Aqualytics, Corvallis, OR, USA (e-mail:
nick@aqualytics.eco).}%
\thanks{Preprint, 30 September 2026. To be submitted to \emph{IEEE Trans. Microw.
Theory Techn.}}}

\markboth{Preprint --- to be submitted to IEEE Transactions on Microwave Theory and Techniques}{Tufillaro: Memory as motion on a slow invariant manifold}

\maketitle

\begin{abstract}
A single-tone large-signal operating point of a nonlinear two-port is a periodic orbit
of a periodically forced circuit. When the device has memory (self-heating,
trapping), the Floquet exponents of that orbit separate into fast
(electrical) and slow (thermal and trapping) modes, and long-term memory is motion on the
invariant manifold attached to the slow modes. An existence and uniqueness theorem for
that manifold follows from the parameterization method of Cabr\'e, Fontich and de la
Llave, applied to the stroboscopic map at the orbit; the manifold is the
spectral submanifold of Haller and Ponsioen, without a small-forcing parameter. The
manifold is a bundle over the circle of drive phase, its fiber dimension the number of
slow Floquet exponents, and the dynamic X-parameter kernel of Verspecht \emph{et al.}
identifies its reduced dynamics from step changes of the drive amplitude. Consequently, an
exact reduced model
has as many memory states as slow exponents, the memoryless X-parameter surface is the
fixed-point family of the reduced dynamics, and the envelope-domain model is the reduced
dynamics driven by the envelope. The hypotheses are verified and the manifold constructed
for a GaN HEMT compact model with a three-pole thermal network and a drain-lag trap: the
trap contributes a $14\,\mu$s time constant set by the linearization and not by its
$6$\,ms emission time, the thermal submanifolds are nearly flat with linear reduced
dynamics, and the expansion in the trap direction is valid only within a few thermal
voltages ($nV_T\approx26$\,mV), so trap memory needs a global representation of the
manifold.
\end{abstract}

\begin{IEEEkeywords}
Behavioral modeling, dynamic X-parameters, memory effects, spectral submanifolds,
invariant manifolds, Floquet theory, delay embedding, GaN HEMT, self-heating, trapping.
\end{IEEEkeywords}

\section{Introduction}

\IEEEPARstart{A}{} frequency-domain behavioral model fixes a harmonic grid and describes
a device by the map from the incident-wave phasors to the scattered-wave phasors. For a device
operating in a continuous-wave steady state, the form of that map is fixed by time
invariance. A companion paper~\cite{paper1} shows that the map is equivariant under a
weighted action of the circle group. Practical forms of the map include the Cardiff
model~\cite{woodington2010} and the X-parameter framework~\cite{root2005phd,root2013},
both of which take the drive phase as the phase reference. The model form follows from time invariance alone, so it
holds whether or not the device has memory; memory shows itself only when the drive is
modulated. The memoryless model rests on a geometric object, the map equivariant under
the circle group. This paper asks what geometric object underlies a dynamic behavioral
model, the model of a device with memory under a modulated drive.

Two engineering answers exist. Verspecht \emph{et al.}
\cite{verspecht2009memory} extend X-parameters with a kernel integrated over the
history of the envelope (the slowly varying amplitude of the drive), identified from the
transient of the output after a step change of that amplitude. Alternatively, time-domain approaches~\cite{wood2004,wood2005chapter} reconstruct a state
from delayed samples of the envelope and fit a map on the reconstructed state, with the
embedding dimension found empirically. The dynamic gain model of Verspecht \emph{et
al.}~\cite{verspecht2023dg} lies between the two: memory enters through fixed difference
operators on delayed samples of the envelope, linearly, with coefficients that depend on
the current envelope level, and its authors name the choice of which operators to include,
and how many, as the open problem. These answers leave open how many states a memory model
needs, what the identified kernel or map represents, and when a model built at one
operating point can be trusted at another.

A dynamical-systems view gives a third answer,
worked out by Haller and Ponsioen~\cite{haller2016ssm}, who defined the spectral
submanifold (SSM) of a periodically forced system: the unique smoothest invariant manifold tangent to a chosen
set of modes of the linearized dynamics, with reduced dynamics that is the exact
nonlinear continuation of those modes. Their existence theorem is stated for a small
periodic forcing added to a stable fixed point; the general result behind it, due to
Haro and de la Llave~\cite{haro2006rigorous} and, at fixed points, to Cabr\'e, Fontich
and de la Llave~\cite{cabre2003param1}, is not perturbative: it applies to the map at the
orbit itself and does not require the forcing to be small.

This paper puts the engineering and dynamical-systems answers together. A large-signal operating point (LSOP) of a
device with memory is a periodic orbit whose Floquet exponents fall into fast
(electrical) and slow (thermal and trapping) modes separated by many orders of magnitude. Applying
the fixed-point theorem of~\cite{cabre2003param1} to the stroboscopic map of the circuit
at that orbit gives, without any smallness parameter, the invariant manifold attached to
the slow modes (Theorem~\ref{thm:main}): a bundle over the circle of drive phase whose
fiber dimension is the number of slow Floquet exponents.

The reduced dynamics on the slow manifold is the
object that an envelope-domain model\footnote{An envelope-domain model is a mixed
time--frequency description used in circuit simulation when the dynamics has two widely
separated time scales, a carrier and a slowly varying envelope: the carrier is handled in
the frequency domain by harmonic balance and the envelope is integrated in
time~\cite[Secs.~2.4 and 4.10]{pedro2018}.} captures,
and the three questions left open above are answered in terms of it. The number of
memory states of an exact reduced model is the number of slow exponents
(Corollary~\ref{cor:count}). The dynamic X-parameter kernel of~\cite{verspecht2009memory}
is the flow of the reduced dynamics between two of its fixed points, a sum of
exponentials only in the small-step limit (Section~\ref{sec:kernel}). The map of a
time-domain model is the same reduced dynamics in delay coordinates. Stark's
delay-embedding theorem for forced systems~\cite{stark1999delay1} gives the manifold in
those coordinates and bounds the embedding dimension by twice the number of slow
exponents plus one (Section~\ref{sec:bundle}). A model built at one operating point extends to a drive that
moves between operating points under an adiabatic hypothesis that is made precise
(Proposition~\ref{prop:modulated}). Two further facts connect the slow manifold with the memoryless
theory: the memoryless X-parameter surface is the family of fixed points of the reduced
dynamics, and the slow manifold inherits the equivariance of the companion paper~\cite{paper1}
(Section~\ref{sec:fixedpoints}).

The second half of the paper checks the hypotheses of Theorem~\ref{thm:main} on a GaN HEMT compact model, the
ASM-HEMT~\cite{khandelwal2018asmhemt} with a three-pole thermal network and a
drain-lag trap, run in an open-source circuit simulator with the slow states handled by
a two-timescale reduction (Section~\ref{sec:example}). The
non-resonance conditions of the theorem hold for the full set of slow modes but fail, at
order 15, for a model that keeps only the slowest thermal mode; the trap contributes a
Floquet exponent with a $14\,\mu$s time constant that is a property of the linearization
and not of the trap's $6$\,ms emission time; and the polynomial parameterization the theorem produces
at the operating point is adequate for the thermal submanifolds but valid only within
a few millivolts of drive for the trap. The practical conclusion, illustrated by the GaN HEMT simulation, is that the geometric
object underlying a memory model is the one Theorem~\ref{thm:main} describes, a unique,
attracting, finite-dimensional invariant manifold, with its reduced dynamics represented
globally.

\section{Setting}
\label{sec:setting}

\subsection{The circuit as a periodically forced analytic system}

Let $z\in\R^N$ collect the state variables of the device and its embedding circuit ---
node charges, inductor fluxes, and the internal memory states (junction temperatures,
trap occupancies). A time-invariant device driven at one frequency $\omega=2\pi f_0$
through fixed terminations obeys
\begin{equation}
  \dot z = G\bigl(z,\ \Rel(A\,e^{j\omega t})\bigr) =: G(z,\theta;A),\qquad \theta=\omega t,
  \label{eq:ode}
\end{equation}
where $A=(A_1,A_2)\in\C^2$ are the incident waves at the two ports and $G$ is analytic
in $z$ on the operating region and $2\pi$-periodic in $\theta$.\footnote{Compact models
are built to be smooth; the ASM-HEMT used in Section~\ref{sec:example} is analytic on
its operating region, and the thermal and trap networks added to it are analytic by
construction. Piecewise definitions, where present, are clamps inactive at the operating
points considered.} Because the device is time invariant, $\theta$ enters only through
the instantaneous drive, so
\begin{equation}
  G\bigl(z,\theta;\,e^{j\varphi}A\bigr) = G(z,\theta+\varphi;A)
  \qquad\text{for all }\varphi\in\Sone :
  \label{eq:equivariance}
\end{equation}
the phase-shift equivariance of the companion paper~\cite{paper1}, here on the extended
phase space $(z,\theta)\in\R^N\times\Sone$.

\subsection{The LSOP as a periodic orbit}

A single-tone LSOP is a $T$-periodic solution $Z(\theta;A)$ of \eqref{eq:ode},
$T=2\pi/\omega$. Let $\Phi_t(\cdot\,;A)$ be the flow of \eqref{eq:ode} from $\theta=0$
and $P_A:=\Phi_T(\cdot\,;A)$ the stroboscopic map, an analytic diffeomorphism on the
operating region. The LSOP is a fixed point $z^\ast=Z(0;A)$ of $P_A$, and
$M:=DP_A(z^\ast)$ is its monodromy matrix, with eigenvalues the Floquet multipliers
$\mu_i$ and exponents $\lambda_i=T^{-1}\log\mu_i$, defined modulo $j\omega$; only their
real parts enter below.

\begin{assumption}[Asymptotic stability]\label{ass:stable}
$|\mu_i|<1$, i.e.\ $\Rel\lambda_i<0$, for $i=1,\dots,N$.
\end{assumption}

This is what ``operating point'' means in practice: the periodic steady state a
simulation converges to and a measurement settles on. It implies, by the implicit function theorem, that the LSOP persists and depends
analytically on $A$ near any drive at which it exists, since $1\notin\Spec M$.

\subsection{Slow modes and the spectral quotient}

Order the exponents by decay rate, $0>\Rel\lambda_1\ge\dots\ge\Rel\lambda_N$. In a
device with memory they fall into two groups separated by many orders of magnitude: the
slow modes (thermal and trap, $10^{2}$--$10^{6}\,\mathrm{s}^{-1}$), with which memory is
associated, and the fast modes (electrical, $10^{10}\,\mathrm{s}^{-1}$ and beyond). Let
$E\subset\R^N$ be the spectral subspace of $M$ for the $q$ multipliers of the slow
modes, and $E^c$ the complementary spectral subspace.

\begin{definition}[Absolute spectral quotient~\cite{haller2016ssm}]
\begin{equation}
  \Sigma(E):=\Int\!\left[\frac{\min_{i}\Rel\lambda_i}{\max_{\lambda_i\in\Spec(M|_E)}\Rel\lambda_i}\right],
\end{equation}
the integer part of the ratio of the fastest decay rate in the circuit to the slowest
decay rate in $E$.
\end{definition}

\begin{assumption}[Non-resonance of the slow modes]\label{ass:nonres}
For every exponent $\lambda_l$ outside the slow modes and every $m\in\N^q$ with
$2\le|m|\le\Sigma(E)$,
\begin{equation}
  \langle m,\Rel\lambda_E\rangle:=\sum_{i=1}^{q}m_i\Rel\lambda_i\;\ne\;\Rel\lambda_l .
  \label{eq:nonres}
\end{equation}
\end{assumption}

This is the form given in~\cite{haller2016ssm}; it is sufficient for the condition used
in the proof below, that no product of at most $\Sigma(E)$ multipliers from $E$ equals a
multiplier outside $E$, since differing real parts of exponents give differing moduli
of multipliers.

\section{The Slow Manifold of an LSOP}
\label{sec:theorem}

\begin{theorem}\label{thm:main}
Let $G$ in \eqref{eq:ode} be analytic in $z$ and let the LSOP $Z(\cdot\,;A)$ satisfy
Assumptions~\ref{ass:stable} and~\ref{ass:nonres}. Then:
\begin{enumerate}
\item[(a)] There is a $q$-dimensional analytic manifold $W_0\ni z^\ast$, tangent to $E$
  at $z^\ast$ and locally invariant under $P_A$.
\item[(b)] $W_0$ is unique among locally invariant manifolds tangent to $E$ at $z^\ast$
  of class $C^{\Sigma(E)+1}$; in particular it is the unique analytic one.
\item[(c)] With $W_\theta:=\Phi_{\theta/\omega}(W_0;A)$, the set
  $\Wc(A):=\{(z,\theta)\in\R^N\times\Sone: z\in W_\theta\}$ is a $(q{+}1)$-dimensional
  analytic manifold, locally invariant under the extended flow
  $(\dot z,\dot\theta)=(G,\omega)$, containing the LSOP, and fibering over the circle of
  drive phase with fiber $W_\theta$ tangent to the slow Floquet subspace at
  $Z(\theta;A)$.
\item[(d)] $P_A|_{W_0}$ is analytically conjugate to a polynomial map $R:E\to E$ with
  $R(0)=0$, $DR(0)=M|_E$, of degree at most $\Sigma(E)$; $R$ may be taken linear if
  there are no internal resonances $\langle m,\Rel\lambda_E\rangle=\Rel\lambda_i$,
  $\lambda_i\in\Spec(M|_E)$, $2\le|m|\le\Sigma(E)$, and of degree at most $M_0-1$ if
  no internal resonance has order $|m|\ge M_0$.
\end{enumerate}
\end{theorem}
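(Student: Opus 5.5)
The plan is to reduce everything to the fixed-point theorem of Cabr\'e, Fontich and de la Llave~\cite{cabre2003param1} for the analytic diffeomorphism $P_A$ at $z^\ast$, and then transport the result around the orbit with the flow. First, $P_A$ is analytic near $z^\ast$: $G$ is analytic in $z$ and continuous and periodic in $\theta$, so analytic dependence on initial data gives an analytic time-$T$ map. By Assumption~\ref{ass:stable} it is a local diffeomorphism with $DP_A(z^\ast)=M$ and $\Spec M$ inside the unit disk.

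The subspace $E$ is $M$-invariant and contains the slowest multipliers, so it is a spectral subspace of the kind the theorem treats. We check its hypotheses as follows.
\begin{itemize}
\item $\Spec(M|_E)$ and $\Spec(M|_{E^c})$ lie in the open unit disk, which is Assumption~\ref{ass:stable}.
\item The required non-resonance is $\mu^m\notin\Spec(M|_{E^c})$ for $2\le|m|\le L$. The integer $L$ in~\cite{cabre2003param1} is defined through $\log|\mu|$ and equals $\Sigma(E)$ once exponents are written as $T^{-1}\log\mu$. This follows from Assumption~\ref{ass:nonres} and the remark after it: $|\mu^m|=e^{T\langle m,\Rel\lambda_E\rangle}\neq e^{T\Rel\lambda_l}=|\mu_l|$.
\end{itemize}

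The theorem then solves the invariance equation $P_A\circ K=K\circ R$, with $K(0)=z^\ast$ and $DK(0)$ the inclusion of $E$. The unknowns are computed order by order up to $\Sigma(E)$ by cohomological equations, which are solvable on the $E^c$ component by non-resonance. At higher orders the equations are contractions, and a contraction argument in analytic function spaces closes the construction. Part (a) is then immediate with $W_0=K(B)$ for a small ball $B\subset E$: $K$ is an analytic embedding near $0$, and $P_A(W_0)\subset K(R(B))\subset W_0$ locally.

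Part (b) is the uniqueness clause of the same theorem. Any $C^{\Sigma(E)+1}$ invariant manifold tangent to $E$ is the graph of a function over $E$. Its $\Sigma(E)$-jet is forced by the invariance equation, and the remainder is fixed by a contraction because $\Sigma(E)+1$ exceeds the spectral gap ratio. I expect (b) to be the main obstacle. The care is in matching this paper's $\Sigma(E)$, which uses the integer part, with the regularity threshold stated in~\cite{cabre2003param1}, and in handling the boundary case where the ratio of decay rates is an integer.

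Part (d) comes from the freedom in choosing $R$ in the parameterization method. At order $k$ the $E$-component of the cohomological equation can be solved with zero contribution to $R$ unless $\mu^m=\mu_i$ for some $\mu_i\in\Spec(M|_E)$. By the same modulus argument, absence of the real-part internal resonances excludes this, so $R=M|_E$ is possible. If no internal resonance has order $\ge M_0$, $R$ has terms only up to degree $M_0-1$. In every case $R$ is a polynomial of degree at most $\Sigma(E)$, since beyond that order the contraction absorbs everything into $K$.

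For (c), set $W_\theta=\Phi_{\theta/\omega}(W_0)$. Each flow map is an analytic diffeomorphism, so $W_\theta$ is an analytic $q$-manifold. The map $(w,\theta)\mapsto(\Phi_{\theta/\omega}(w),\theta)$ gives $\Wc(A)$ an analytic manifold structure of dimension $q+1$. Closure at $\theta=2\pi$ holds because $\Phi_T(W_0)=P_A(W_0)\subset W_0$ locally, so the bundle is well defined over $\Sone$. Invariance under the extended flow follows from the cocycle property, using the $T$-periodicity of $G$ in $\theta$. The fiber $W_\theta$ contains $Z(\theta;A)$, and its tangent space is $D\Phi_{\theta/\omega}(z^\ast)E$, which is by definition the slow Floquet subspace at phase $\theta$.
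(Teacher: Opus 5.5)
Your proposal is correct and follows the paper's proof. Theorem~1.1 of~\cite{cabre2003param1}, applied to $P_A$ at $z^\ast$ with $X_1=E$ and $X_2=E^c$, yields (a), (b) and (d); transporting $W_0$ by the flow, with $P_A(W_0)\subset W_0$ closing the family over $\Sone$, yields (c). The integer boundary case you flag in (b) does not arise: $\Int[x]+1>x$ for every $x$, so $L=\Sigma(E)$ always satisfies $(L+1)\max_E\Rel\lambda<\min\Rel\lambda$, and (b) is simply that theorem's uniqueness clause. One small correction: analyticity of $\Wc(A)$ in the $\theta$ direction needs $G$ analytic in $\theta$, which holds because the drive is $\Rel(Ae^{j\theta})$, not merely continuous in $\theta$ as you wrote.
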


\begin{proof}
This is Theorem~1.1 of~\cite{cabre2003param1} applied to $F=P_A$ at $z^\ast$ with
$X=\R^N$, $X_1=E$, $X_2=E^c$, in the analytic class. Its hypotheses, numbered as in~\cite{cabre2003param1}, are
satisfied as follows: (0) $P_A$ is a
local diffeomorphism, being the time-$T$ map of an analytic flow, so $M$ is invertible;
(1) $E$ is $M$-invariant, and since $E^c$ is also invariant the off-diagonal block of
its triangular form vanishes; (2) $\Spec(M|_E)$ lies in the open unit disk by
Assumption~\ref{ass:stable}; (3) $0\notin\Spec(M|_{E^c})$. The integer $L$ of~\cite{cabre2003param1} must
satisfy $\Spec(M|_E)^{L+1}\Spec(M^{-1})\subset\{|z|<1\}$; the worst case pairs the
slowest multiplier of $E$ with the fastest of the whole spectrum, so the condition
reads $(L+1)\max_E\Rel\lambda<\min\Rel\lambda$, i.e.\
$L+1>\min\Rel\lambda/\max_E\Rel\lambda$, which $L=\Sigma(E)$ satisfies. (4) The
non-resonance $\Spec(M|_E)^{i}\cap\Spec(M|_{E^c})=\emptyset$ for $2\le i\le L$ follows
from Assumption~\ref{ass:nonres}. (5) $L+1\le r$, with $r$ the differentiability order of~\cite{cabre2003param1}, holds
since $F$ is analytic ($r=\omega$). The
conclusions of Theorem~1.1 of~\cite{cabre2003param1} give (a), (d) and (b): an analytic $K:U_1\subset E\to\R^N$ with
$P_A\circ K=K\circ R$, $K(0)=z^\ast$, $DK(0)$ the inclusion of $E$, $R$ polynomial of
degree $\le L$, linear under its condition (7) and of degree $\le M_0-1$ under its
condition (8), and uniqueness of the manifold $K(U_1)$ among $C^{L+1}$ locally invariant
manifolds tangent to $E$. For (c): $\Phi_t$ maps $W_\theta$ into $W_{\theta+\omega t}$ by
definition and $W_{2\pi}=P_A(W_0)\subset W_0$ by (a), so the family closes up over the
circle; each $W_\theta$ is the image of $W_0$ under an analytic diffeomorphism depending
analytically on $\theta$, $z^\ast\in W_0$, and $T_{Z(\theta)}W_\theta=D\Phi_{\theta/\omega}(z^\ast)E$
is the slow Floquet subspace transported along the orbit.
\end{proof}

We want to emphasize three points about Theorem~\ref{thm:main}. First, on the relation
to the SSM theorems: Haller and Ponsioen~\cite[Thm.~4]{haller2016ssm} treat periodic
forcing as an $\varepsilon$-small perturbation of an autonomous system with a stable
fixed point, proved through the invariant-torus theorem of Haro and de la
Llave~\cite{haro2006rigorous}. An LSOP is not a small perturbation of a bias point, but
for one-frequency forcing no perturbation argument is needed: the torus is the periodic
orbit, its stroboscopic map has a fixed point, and the fixed-point theorem applies
directly. Theorem~\ref{thm:main} is the SSM theorem in its one-frequency form, stated at the
orbit one simulates or measures; the non-resonance condition and spectral quotient are
exactly those of~\cite{haller2016ssm}. The computational side of the parameterization
method is presented in~\cite{haro2016book}.

Second, on attraction: for the slow modes --- $E$ spanned by the $q$ slowest modes ---
every rate transverse to $E$ is strictly faster than every rate within it. $W_0$ is then
a normally hyperbolic attracting invariant manifold of $P_A$, and by Fenichel's
theory~\cite{fenichel1979} trajectories near the LSOP approach $\Wc(A)$ at the fast rates
and then the LSOP within $\Wc(A)$ at the slow rates. This is why $\Wc(A)$ is the object a
measurement observes: once the electrical transient has decayed, the trajectory lies on
$\Wc(A)$.

Third, on the smoothness class: for a realistic device $\Sigma(E)$ is of order $10^{7}$
(Section~\ref{sec:example}), so uniqueness in (b) is asserted among $C^{10^7}$ manifolds.
For an analytic compact model this is harmless. For a manifold inferred from measured
data it is vacuous, and a uniqueness claim for a data-driven slow manifold must rest on
the attraction of the second point instead.

\section{Consequences for Behavioral Models}
\label{sec:consequences}

\subsection{The number of memory states}

\begin{corollary}\label{cor:count}
Under the hypotheses of Theorem~\ref{thm:main}, an exact reduced model of the long-term
memory has $q=\dim E$ states, the number of slow modes. A
reduced model on a proper spectral subspace $E'\subsetneq E$ exists and is unique only if
\eqref{eq:nonres} holds with the omitted slow modes counted among the outside
exponents and $\Sigma$ replaced by the correspondingly smaller quotient; the uniqueness
is, as in Theorem~\ref{thm:main}(b), among manifolds of class $C^{\Sigma+1}$.
\end{corollary}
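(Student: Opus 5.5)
The plan is to derive both parts of Corollary~\ref{cor:count} from Theorem~\ref{thm:main}, applied twice: once with the full slow subspace $E$, and once with a proper spectral subspace $E'$.

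\emph{First part (the count $q$).} By Theorem~\ref{thm:main}(a),(d), $P_A|_{W_0}$ is analytically conjugate to the polynomial map $R:E\to E$, so the dynamics on $W_0$ is exactly a $q$-dimensional map. Part~(c) turns this into a $(q{+}1)$-dimensional manifold $\Wc(A)$ carrying $q$ fiber coordinates together with the drive phase $\theta$, which is not a memory state since it is fixed by the drive. By the attraction remark following the theorem, after the fast transient every trajectory near the LSOP lies on $\Wc(A)$ up to exponentially small error, so the $q$ coordinates of $K^{-1}$ on $W_0$ suffice for an exact reduced model. To show that fewer states cannot work, I would argue as follows. The linearization of any exact reduced model at the fixed point must reproduce the $q$ slow multipliers $\Spec(M|_E)$, because these are observed in the slow transient. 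An invariant manifold of dimension less than $q$ that is tangent to a spectral subspace can carry only the multipliers of that subspace. Hence at least $q$ states are needed. Here I will take ``exact'' to mean a conjugacy to $P_A$ on an attracting invariant manifold; the main point to pin down is the linearization step, namely that $DR(0)=M|_E$ fixes the number of states.

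\emph{Second part (a proper subspace $E'\subsetneq E$).} Apply Theorem~\ref{thm:main} with $X_1=E'$ and $X_2$ equal to the complementary spectral subspace, which now contains the omitted slow modes. Checking hypotheses (0)--(5) of the proof goes through verbatim. The spectral quotient becomes $\Sigma(E')$, with $\max_{E'}\Rel\lambda$ in the denominator; if $E'$ omits the slowest mode this quotient is smaller. Hypothesis~(4) now requires that no product of at most $\Sigma(E')$ multipliers from $E'$ equal any multiplier outside $E'$, including the omitted slow ones. This is exactly \eqref{eq:nonres} with those modes counted as outside exponents, and it is sufficient by the same modulus argument as before. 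Existence and uniqueness in the class $C^{\Sigma(E')+1}$ then follow as in parts~(a),(b).

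The main obstacle is the converse reading of ``exists and is unique only if''. It must be shown that, when the strengthened condition fails, existence or uniqueness genuinely breaks down. I would argue that a resonance $\langle m,\lambda_{E'}\rangle=\lambda_l$ with $\lambda_l$ an omitted slow exponent makes the order-$|m|$ cohomological equation for $K$ unsolvable unless the corresponding right-hand coefficient vanishes. In the nongeneric case where it does vanish, a one-parameter family of solutions appears instead. So either no analytic manifold exists, or uniqueness fails. This follows the standard obstruction in the parameterization method of~\cite{cabre2003param1}, order by order. I will state this as the necessity direction, noting that it holds generically rather than in every case.
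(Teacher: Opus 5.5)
Your first part and the sufficiency half of the second part match what the paper does. The paper gives the corollary no separate proof. It is Theorem~\ref{thm:main}(a),(c),(d) read at $E$, together with the proof of that theorem re-run with $X_1=E'$; there $L=\Sigma(E')$, and hypothesis~(4) of~\cite{cabre2003param1} now sees the omitted slow multipliers in $\Spec(M|_{X_2})$. The paper attempts no converse: its ``only if'' means what the theorem needs (``not covered by the theorem as stated'', Section~\ref{sec:example}). Your ``no fewer'' step goes beyond the paper and is sound. It is cleaner as a dimension count, since an exact model is conjugate to $P_A|_{W_0}$, whose domain is $q$-dimensional. Arguing through linearizations is weaker, because a merely topological conjugacy need not preserve them.

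The gap is in your necessity step, and it concerns the condition itself, not the details. The obstruction you invoke is an exact multiplier resonance $\mu^m=\mu_l$, with $\mu^m=\prod_i\mu_i^{m_i}$ over $E'$, i.e.\ $\langle m,\lambda_{E'}\rangle=\lambda_l$ modulo $j\omega$. But \eqref{eq:nonres} constrains only real parts and is strictly stronger. Suppose $\langle m,\Rel\lambda_{E'}\rangle=\Rel\lambda_l$ while the imaginary parts differ modulo $\omega$, say a real mode in $E'$ against an omitted complex pair. Then $|\mu^m|=|\mu_l|$ but $\mu^m\ne\mu_l$, so the order-$|m|$ divisor $\mu^m-\mu_l$ is nonzero and hypothesis~(4) still holds. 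Absent an exact multiplier resonance elsewhere, the manifold tangent to $E'$ then exists and is unique in $C^{\Sigma(E')+1}$, although \eqref{eq:nonres} fails. So necessity of \eqref{eq:nonres} as written cannot be proved.

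What your argument can give is necessity of the multiplier condition. That coincides with \eqref{eq:nonres} for real exponents with positive multipliers, which is the case of Section~\ref{sec:example}. For that restricted claim, two adjustments are needed.
\begin{itemize}
\item Put the obstruction on the graph $h:E'\to X_2$ of the manifold, whose Taylor coefficients below the lowest resonant order are forced, not on $K$ for a chosen $R$. It is a genuine obstruction because the resonant component lies in $X_2$. There $R$, being valued in $E'$, cannot absorb it, unlike the internal resonances of Theorem~\ref{thm:main}(d).
\item Your dichotomy then covers every single-resonance case. A nonzero coefficient means no $C^{|m|}$ invariant manifold tangent to $E'$ exists; a zero coefficient leaves a free coefficient and hence a family of analytic manifolds. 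The hedge ``generically'' is needed only when several resonances interact.
\end{itemize}
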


The first part of the corollary is the structural answer to ``how many memory states
does a dynamic X-parameter model need'': as many as there are slow Floquet exponents at
the operating point, no fewer. The second part, on reduced models that keep only some of
the slow modes, is the one with practical consequences. Between the slow modes and the
fast modes the ratio of rates is $10^{5}$ or more, and \eqref{eq:nonres} is a
coincidence condition at order $10^{5}$--$10^{7}$, generically true. Among the slow
modes the ratios are small integers to within the accuracy of thermal and trap fits ---
Section~\ref{sec:example} exhibits an exact $1{:}15$ --- and a model that keeps only the
slowest mode fails the condition at that order.

\subsection{The memoryless model is the fixed-point family}
\label{sec:fixedpoints}

By Assumption~\ref{ass:stable}, $A\mapsto Z(\cdot\,;A)$ is analytic on an open set of
drives, and the static spectral map of~\cite{paper1} is this family composed with the
output map: the $h$-th harmonic of the scattered wave is the $h$-th Fourier coefficient
of the port output on $Z(\theta;A)$. The equivariance \eqref{eq:equivariance} gives
\begin{equation}
  Z(\theta;e^{j\varphi}A)=Z(\theta+\varphi;A)\ \Rightarrow\
  Z_h(e^{j\varphi}A)=e^{jh\varphi}Z_h(A),
\end{equation}
the statement that harmonic $h$ carries weight $h$, now derived from the circuit
equations \eqref{eq:ode} rather than from the input--output map as in~\cite{paper1}; and the same transport applies to the slow manifold,
\begin{equation}
  \Wc(e^{j\varphi}A)=\{(z,\theta): z\in W_{\theta+\varphi}(A)\},
  \label{eq:Wtransport}
\end{equation}
so the family of slow manifolds over the circle of drive phase is one manifold carried
around the circle. Thus the fixed-point family lies on the manifold, the memoryless
model is a section of the bundle, and the circle symmetry of the companion paper~\cite{paper1} is a
symmetry of the whole dynamic object.

\subsection{The reduced dynamics is the dynamic X-parameter kernel}
\label{sec:kernel}

Fix coordinates $\xi\in E\simeq\R^{q}$ on the fiber through the parameterization $K$ of
Theorem~\ref{thm:main}, extended over the bundle by the flow. The scattered wave along a
trajectory on $\Wc(A)$ is a function $B=\mathcal H(\xi,\theta;A)$ and $\xi$ obeys the
reduced dynamics, of stroboscopic form $\xi\mapsto R(\xi)$. Verspecht \emph{et al.}
\cite{verspecht2009memory} model long-term memory by
\begin{equation}
  B(t)=F\bigl(A(t)\bigr)+\int_0^{\infty}G\bigl(A(t),A(t-u),u\bigr)\,du,
  \label{eq:verspecht}
\end{equation}
identifying $G$ from the transient of $B$ after a step of the envelope from
$A_{\mathrm{prev}}$ to $A$. In the present language that transient is the trajectory on
$\Wc(A)$ that starts at the fixed point of $\Wc(A_{\mathrm{prev}})$: the step response is a
flow line of the reduced dynamics between two points of the fixed-point family,
$G(A,A_{\mathrm{prev}},u)$ is minus the time derivative of $\mathcal H$ along it, and the
memoryless $F$ is $\mathcal H$ at the fixed point. Two consequences follow. First, $G$ is
nonlinear in $A_{\mathrm{prev}}$ exactly to the extent that the reduced dynamics is
nonlinear, and only in the limit $A_{\mathrm{prev}}\to A$ is it a sum of $q$ exponentials
with the slow Floquet exponents as rates. Second, in a compact model the slow states are
physical (temperatures, trap voltages), so the fiber coordinates can be taken to be those
states to leading order in the ratio of rates; the parameterization of $\Wc(A)$ is then
the periodic steady state computed with the slow states pinned, and the reduced dynamics
is their period-averaged equation. That is the two-timescale computation of
Section~\ref{sec:example}, and Theorem~\ref{thm:main} is the statement that the object
computed in that way is well defined and unique.

\subsection{Modulated drive}
\label{sec:modulated}

Theorem~\ref{thm:main} is stated at fixed $A$. An envelope-domain model uses the reduced
dynamics with the envelope $A(t)$ as a slowly varying parameter. The justification is
adiabatic: the envelope varies slowly compared with the electrical rates, so the
trajectory follows the family of slow manifolds. Proposition~\ref{prop:modulated} makes
this precise.

\begin{proposition}[Modulated drive]\label{prop:modulated}
Let $A(s)$, $s\in\R/S\mathbb Z$, be a smooth periodic envelope such that the hypotheses of
Theorem~\ref{thm:main} hold for every $A(s)$, with the spectral gap between the slow
modes and the fast modes bounded below uniformly in $s$, and consider the drive
$A(\varepsilon t)$. Then there is $\varepsilon_0>0$ such that for $0<\varepsilon<\varepsilon_0$
the system $\dot z=G(z,\omega t;A(\varepsilon t))$ has a $(q{+}2)$-dimensional invariant
manifold $\Wc_\varepsilon\subset\R^N\times\Sone\times(\R/S\mathbb Z)$, normally
hyperbolic and attracting at the electrical rates, which is $O(\varepsilon)$-close in
$C^{1}$ to the frozen family $\{(z,\theta,s): z\in W_\theta(A(s))\}$, the critical
manifold of geometric singular perturbation theory~\cite{fenichel1979,jones1995}. The flow on
$\Wc_\varepsilon$ is, in the fiber coordinates $\xi$ of the frozen family,
\begin{equation}
  \dot\xi=r\bigl(\xi,\theta;A(s)\bigr)+O(\varepsilon),\qquad \dot s=\varepsilon,
  \label{eq:reduced_mod}
\end{equation}
with $r$ the reduced vector field of the frozen system at drive $A(s)$.
\end{proposition}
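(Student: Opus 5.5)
The plan is to treat the modulated system as a slow--fast system in the sense of geometric singular perturbation theory, with the envelope phase $s$ as the only slow variable. The slow states (temperatures, trap voltages) are not slow variables here. The critical manifold will be the frozen family of Theorem~\ref{thm:main}(c) rather than a manifold of equilibria.

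\textbf{Step 1 (extended autonomous system).} Write $s=\varepsilon t$ and pass to the autonomous system $\dot z=G(z,\theta;A(s))$, $\dot\theta=\omega$, $\dot s=\varepsilon$ on $\R^N\times\Sone\times(\R/S\mathbb Z)$. At $\varepsilon=0$ every slice $s=\text{const}$ is invariant. By Theorem~\ref{thm:main}(c) each slice contains the $(q{+}1)$-dimensional manifold $\Wc(A(s))$.

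\textbf{Step 2 (smooth critical manifold).} Show that $\mathcal M_0:=\{(z,\theta,s): z\in W_\theta(A(s))\}$ is a compact (after restricting fibers to a closed ball in $E$) smooth $(q{+}2)$-dimensional manifold with boundary. Smoothness in $s$ is the point to check. By Assumption~\ref{ass:stable} and the implicit function theorem, $z^\ast(A)$ and $M(A)$ depend analytically on $A$. By the uniform spectral gap, the spectral projection onto $E(A)$ is smooth in $A$. The parameterization $K$ in the proof of Theorem~\ref{thm:main} is obtained from a contraction argument whose data depend smoothly on $A$, so $K$ depends smoothly on $A$. One must check that the non-resonance (finitely many conditions up to $\Sigma$, an open condition) holds uniformly on the compact circle of $s$. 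I would take this from the hypothesis ``for every $A(s)$'' together with compactness, and fix a uniform $L$.

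\textbf{Step 3 (normal hyperbolicity and Fenichel).} Within each slice, $\Wc(A(s))$ is normally attracting at the fast rates, as in the second remark after Theorem~\ref{thm:main}. The rates tangent to it are at most the slow rates, so the ratio is large and uniform in $s$ by the spectral gap hypothesis. The neutral $s$-direction is tangent to $\mathcal M_0$, so $\mathcal M_0$ is a compact normally hyperbolic attracting invariant manifold of the $\varepsilon=0$ flow. Fenichel's persistence theorem~\cite{fenichel1979} (in the overflowing/inflowing form to handle the fiber boundary, via a standard bump-function modification) then gives $\Wc_\varepsilon$ that is $O(\varepsilon)$-close in $C^1$, locally invariant, and attracting. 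The $O(\varepsilon)$ estimate comes from the vector field differing from the $\varepsilon=0$ one by $O(\varepsilon)$ in $C^1$.

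\textbf{Step 4 (reduced flow).} Write $\Wc_\varepsilon$ as the graph $z=K(\xi,\theta;A(s))+\varepsilon h(\xi,\theta,s,\varepsilon)$ over the frozen fiber coordinates. Substituting into the equations shows $\dot\xi$ equals the frozen reduced field $r$ plus terms from $\partial_s K\,\varepsilon$ and from $\varepsilon h$, all $O(\varepsilon)$. Here $r$ is the vector field generating the conjugacy $R$ along the flow of (c), and $\dot s=\varepsilon$ holds exactly.

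\textbf{Main obstacle.} I expect the obstacle to be that the manifold of Theorem~\ref{thm:main} is only \emph{locally} invariant, so the compactness Fenichel needs must be manufactured. The smooth $s$-dependence of the non-unique-looking $K$ must also be secured; uniqueness in (b) resolves the latter, since $K$ is pinned down by a normalization. For the former, I would first try restricting to a sub-ball of the fibers on which the flow is inflowing. This is available because the slow dynamics in $\Wc(A(s))$ contracts toward the LSOP, so small balls in suitable adapted norms are forward invariant, uniformly in $s$.
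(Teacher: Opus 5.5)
Your proposal is correct and follows essentially the paper's route. You extend by the slow phase $s$, take the frozen family (fibers shrunk to inflowing neighborhoods of the LSOP) as a compact critical manifold, get normal hyperbolicity from the uniform slow/fast gap with the $s$ direction neutral and tangent, invoke Fenichel's persistence for inflowing manifolds, and read off \eqref{eq:reduced_mod} by restricting the flow to $\Wc_\varepsilon$ in frozen coordinates. The one difference is that the paper applies persistence to the stroboscopic map $(z,s)\mapsto(\Phi_T(z;A(s+\varepsilon\,\cdot)),\,s+\varepsilon T)$ and then sweeps by the flow, which avoids your need for $\theta$-periodic inflowing fiber domains, whereas your Step~2 makes explicit the smooth $s$-dependence of the frozen family (smooth parameterization plus uniqueness in Theorem~\ref{thm:main}(b)) that the paper takes for granted.
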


\begin{proof}[Proof sketch]
Extend the phase space by the slow phase $s$ and pass to the stroboscopic map of the
extended system,
$(z,s)\mapsto(\Phi_T(z;A(s+\varepsilon\,\cdot)),\,s+\varepsilon T)$. At $\varepsilon=0$ it is
$(z,s)\mapsto(P_{A(s)}(z),s)$, for which the union over $s$ of the fibers $W_0(A(s))$,
each taken as a closed neighborhood of the fixed point in $W_0(A(s))$, is a compact
invariant manifold with boundary; the boundary is inflowing, since the fiber dynamics is
attracted to the fixed point, so the relevant persistence theorem is Fenichel's for
inflowing invariant manifolds~\cite{fenichel1979}. The manifold is normally hyperbolic
because, by the second point after Theorem~\ref{thm:main},
the normal multipliers are those of the fast modes, of modulus at most
$e^{-\lambda_{\mathrm{el}}T}$ with $\lambda_{\mathrm{el}}>0$ the slowest fast decay rate, while the tangential multipliers are those of the slow
modes and $1$ (the $s$ direction). The map for $\varepsilon>0$ is $O(\varepsilon)$-close
in $C^{1}$ on the compact set, so the persistence theorem for normally hyperbolic
inflowing invariant manifolds~\cite{fenichel1979,hirsch1977} gives a nearby invariant manifold,
attracting, $C^{1}$-$O(\varepsilon)$-close, and smooth of a class limited by the ratio
of tangential to normal rates. Sweeping it by the flow over one drive period gives
$\Wc_\varepsilon$, and \eqref{eq:reduced_mod} is the restriction of the flow to it
written in the coordinates of the frozen family.
\end{proof}

The proposition says what an envelope-domain model is: the frozen reduced dynamics driven
by the envelope, to first order in the ratio of envelope rate to electrical rate
($10^{-3}$ for a $10$\,MHz envelope). A rigorous treatment of forcing with an aperiodic envelope is given by
Haller and Kaundinya~\cite{haller2024aperiodic}; periodicity is assumed here for
compactness, and it is the form of the test signals used by modulated-drive network
analyzers~\cite{verspecht2022vca}.

\section{The Bundle in Time-Domain Coordinates}
\label{sec:bundle}

A time-domain behavioral model~\cite{wood2004,wood2005chapter} does not use phasors. It
samples a port waveform, forms delay vectors, and fits a map on the reconstructed state,
appealing to Takens' theorem for the embedding.
Takens' theorem is stated for autonomous systems, and a driven device is not autonomous.
The applicable statement is Stark's delay-embedding theorem for forced
systems~\cite{stark1999delay1}, which describes exactly the bundle of
Theorem~\ref{thm:main}(c).

Stark considers a skew product: a forcing system $g$ on a manifold $N$ and a forced
system $f(\cdot,y)$ on a manifold $M$ of dimension $m$, observed through a function
$\phi:M\to\R$ that does not depend on the forcing state. For a periodically forced
vector field sampled at interval $\tau$ (period normalized to one), $N=\Sone$ is the
circle of drive phase and $g$ is the rotation by $\tau$; his Theorem~3.3 states that for
an open dense set of vector fields and observables the delay map
$(x,\theta)\mapsto(\phi(x),\phi(f^{\tau}(x,\theta)),\dots,\phi(f^{(d-1)\tau}(x,\theta)))$
is an embedding of the whole of $M\times\Sone$ when $d\ge2m+3$ and $k\tau\notin\mathbb Z$
for $1\le k\le d$, and an embedding of the $k_0$ fibers $M\times\{\theta_0,\dots,\theta_{k_0-1}\}$
when $\tau=1/k_0$ and $d\ge2m+1$; sampling once per period, $\tau=1$, reduces to
Takens' theorem for the stroboscopic map on a single fiber.

Stark's theorem applies here with $M$ a compact neighborhood of the fixed point in the fiber $W_0$ of the
slow manifold, so that $m=q$, with $f(\cdot,\theta)$ the flow of \eqref{eq:ode} restricted
to $\Wc(A)$ --- the slow manifold is invariant, so the restriction is a skew product of
exactly Stark's kind --- and with $\phi$ a node voltage of the circuit, which is a
function of the state alone. After the electrical transient has died, the port waveform
is a time series of such an observable on $\Wc(A)$, and:

\begin{corollary}[Embedding dimension of an envelope series]\label{cor:embed}
Under the hypotheses of Theorem~\ref{thm:main} and for generic circuit and observable,
$2q+1$ delays of a port waveform sampled once per drive period suffice to embed the fiber
$W_0$, and $2q+3$ delays sampled at an interval incommensurate with the period suffice to
embed the whole bundle $\Wc(A)$, where $q$ is the number of slow modes.
\end{corollary}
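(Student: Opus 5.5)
The plan is to read Corollary~\ref{cor:embed} directly off Stark's Theorem~3.3~\cite{stark1999delay1}, applied to the restriction of the forced flow to the slow manifold $\Wc(A)$ of Theorem~\ref{thm:main}(c). The first step sets up the skew product. Take $M$ to be a compact neighborhood $\bar U$ of $z^\ast$ in the fiber $W_0$; it is a $q$-dimensional analytic manifold with boundary by Theorem~\ref{thm:main}(a). Take the forcing space to be $N=\Sone$ with $g$ the rotation by $\tau$ (period normalized to one). Transport the fiber coordinates along the flow as in Theorem~\ref{thm:main}(c), writing a point of $\Wc(A)$ as $(\xi,\theta)$ with $\xi\in\bar U$. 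The forced map is then $f^\tau(\xi,\theta)=\Phi^{-1}_{(\theta+\tau)/\omega}\circ\Phi_{\tau}\circ\Phi_{\theta/\omega}(\xi)$, which is a skew product over the rotation. It is well defined on $\bar U$ because $W_0$ is attracting: the fixed point attracts the fiber dynamics by Assumption~\ref{ass:stable}, so after shrinking $\bar U$ to a sublevel set of a Lyapunov function for $R$ from Theorem~\ref{thm:main}(d), forward images stay in the domain. The observable $\phi$ is a node voltage composed with $K$. It depends only on the state $z$, and this is exactly Stark's requirement that $\phi$ not depend on the forcing state.

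The second step reads off the two cases of Stark's theorem with $m=q$. For $\tau=1$ (so $k_0=1$), sampling once per period, the delay map of the stroboscopic map $P_A|_{W_0}$ embeds the single fiber $W_0$ when $d\ge 2q+1$; in this case the statement reduces to Takens' theorem for a diffeomorphism of a $q$-manifold. For $\tau$ with $k\tau\notin\mathbb Z$ for $1\le k\le d$, the delay map embeds $\bar U\times\Sone\cong\Wc(A)$ when $d\ge 2q+3$. A final remark notes that after the electrical transient has decayed, the measured waveform is a time series of this observable along an orbit on $\Wc(A)$. This follows from the Fenichel attraction noted after Theorem~\ref{thm:main}, so the reconstructed vectors lie in the image of the embedding.

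The main obstacle is the word ``generic.'' Stark's genericity is over $C^r$ vector fields and observables on $M$. Here, however, the vector field on $\Wc(A)$ is not free: it is the restriction of the circuit's analytic $G$ to a manifold determined by $G$. Two more conditions enter as well. Stark's proof needs the periodic points of low period of $f^\tau$ to be finite, with distinct eigenvalues, and the fixed point $z^\ast$ has multipliers $\Spec(M|_E)$ that must be simple. I would handle this by interpreting ``generic circuit and observable'' as follows.
\begin{itemize}
\item[(i)] Assume the multipliers in $\Spec(M|_E)$ are distinct. Since $W_0$ contracts to $z^\ast$, there are no other periodic points on $\bar U$, so Stark's periodic-point hypotheses reduce to this single condition.
\item[(ii)] Perturb only the observable. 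The generic-observable version of Takens/Stark holds once those periodic-point conditions are met, and composition with the analytic embedding $K$ maps an open dense set of observables on $\R^N$ onto a residual set on $\bar U$.
\end{itemize}
If that proves insufficient, the fallback is to state the corollary for prevalent rather than open-dense observables, following Sauer--Yorke--Casdagli.
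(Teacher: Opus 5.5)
Your route is the paper's: its entire argument is the read-off you describe (Stark's Theorem~3.3 with $M$ a compact neighborhood of $z^\ast$ in $W_0$, $m=q$, a node voltage as observable, and the two sampling regimes), with ``generic'' left uninterpreted. The once-per-period half is fine as you have it. Sampling at a fixed phase is Takens' theorem for $P_A|_{W_0}$ with observable $v\circ K$, and your conditions are the right ones: distinct slow multipliers, no other periodic points in a forward-invariant Lyapunov sublevel set, and a generic or prevalent observable.

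The bundle half is where your added argument fails, in two places. First, the coordinates. With fibers parameterized by flow transport, read the fiber map correctly as the transition from phase $\theta$ to $\theta+\omega\tau$, not $\Phi_\tau$ from phase $0$. It is then the identity in $\xi$ except on steps crossing $\theta=2\pi$, where $R$ is applied, so it is not a $C^r$ skew product on $\bar U\times\Sone$. Moreover, the quantity measured at $(\xi,\theta)$ is $v(\Phi_{\theta/\omega}(K\xi))$, not $v(K\xi)$. In any smooth trivialization $\psi_\theta:\bar U\to W_\theta$ the function $v\circ\psi_\theta$ depends on $\theta$. Its range contains $v(Z(\theta))$, the RF waveform along the LSOP, which varies over the cycle far more than $v$ varies across a small fiber. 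So the observable does not satisfy Stark's requirement of independence from the forcing state; the paper's one-line identification passes over the same point. Second, step (ii) is not a valid reading of Stark. When the observable is independent of the forcing state, genericity of the forced map is indispensable: if $f(x,y)$ did not depend on $y$, no $\phi(x)$ could separate $(x,y)$ from $(x,y')$. There is therefore no observable-only version of his theorem conditioned merely on periodic points.

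The two defects cancel usefully. Because the observable does depend on $\theta$, you can drop the product structure altogether. Apply Takens' theorem, in the Sauer--Yorke--Casdagli form you already name as a fallback, to the time-$\tau$ map $F$ of the extended flow $(\dot z,\dot\theta)=(G,\omega)$ on a forward-invariant compact piece of the $(q{+}1)$-dimensional manifold $\Wc(A)$. The incommensurability $k\tau\notin\mathbb Z$, $1\le k\le d$, means $F$ has no periodic points of period at most $d$. The periodic-point hypotheses are therefore vacuous, and $d\ge 2(q+1)+1=2q+3$ delays embed for prevalent observables on $\Wc(A)$, the same count as Stark's $2m+3$. Functions of $z$ alone are rich enough once the projection $(z,\theta)\mapsto z$ embeds that piece in $\R^N$. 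Near the LSOP this holds when $Z$ has no self-intersections and $dZ/d\theta\notin T_{Z(\theta)}W_\theta$. That condition, together with distinct slow multipliers for the fiber statement, is the precise content of ``generic circuit.'' This route also dispenses with your identification $\Wc(A)\cong\bar U\times\Sone$, which requires $\det(M|_E)>0$.
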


The corollary is another model-order rule, complementing the coefficient count and the
truncation order in $r$ of the companion paper~\cite{paper1}: the dimension a time-domain memory model needs is
set by the number of slow exponents, and the intrinsic dimension recovered from data by
false-nearest-neighbor methods, as in~\cite{wood2005chapter}, is an estimate of $q$. We want to emphasize two features of the corollary. First, the
distinction between embedding a fiber and embedding the bundle is the distinction between
an envelope-domain model, which is defined on the fiber and treats the drive phase as
fixed, and a full time-domain model, which must carry the drive phase and requires two
more delays.
Second, the equivariance \eqref{eq:Wtransport} says the bundle is trivial: the fibers over
different drive phases are copies of one another transported by the circle action, so a
model identified at one drive phase applies at every drive phase: the time-domain
form of the gauge freedom of the companion paper~\cite{paper1}.

\section{The Hypotheses Checked on a GaN HEMT with Memory}
\label{sec:example}

\subsection{The device and the two-timescale computation}

The example is the ASM-HEMT compact model (version 101.4.0 of the model described
in~\cite{khandelwal2018asmhemt};
$0.25\,\mu$m, $0.8$\,mm, Verilog-A compiled with OpenVAF and run in ngspice~44 through its OSDI
interface) at $2.45$\,GHz, $V_{DD}=15$\,V, class AB ($V_{gs}=-2.7$\,V, $I_{dq}\approx100$\,mA),
driven at the gate through $50\,\Omega$ and loaded at the drain by an incident-wave
source (A-pull, $A_2=0.3+0.1j\,\sqrt{\mathrm W}$, the load found by a coarse scan for
output power) through an ideal bias tee, with two memory mechanisms added: a three-pole
Foster thermal network with the time constants measured by Kellogg \emph{et al.}
\cite{kellogg2020timeconstants} ($6\,\mu$s, $100\,\mu$s, $1.5$\,ms; $R_{\mathrm{th}}=20$\,K/W
in total, in the range reported for GaN on SiC~\cite{gonzalez2023rth}), and the drain-lag
trap of Jardel \emph{et al.}~\cite{jardel2007electrothermal} as fitted to ASM-HEMT by
Beleniotis \emph{et al.}~\cite{beleniotis2022drainlag} (capture $1$\,ns, emission
$6$\,ms; the trap voltage $V_t$ shifts threshold, mobility and access-region parameters
by their published laws). The diode gating of the trap is smoothed with a softplus of
scale $nV_T=26$\,mV so that $G$ is analytic; $nV_T\to0$ is the ideal peak detector of the
original models. The slow states are $(T_1,T_2,T_3,V_t)$, so $q=4$; the electrical states
are the node charges of the transistor and its terminations. The supply was chosen so
that $V_t$ stays within the range of the published trap fit; the bias therefore differs
from that of the companion paper~\cite{paper1} ($20$\,V, $V_{gs}=-1.5$\,V).

The LSOP was computed by two-timescale reduction: the slow states are pinned during
short transient runs (eight RF periods, the last four analyzed), and their period-averaged
equations --- the thermal network driven by the period-mean dissipated power, the trap by
the period-mean diode currents --- are iterated to a fixed point. At a gate drive of
$1.5$\,V EMF this gives a junction-temperature rise of $30.4$\,K, $V_t=24.80$\,V (the
envelope peak of the drain voltage is $24.67$\,V), an output of $30.0$\,dBm
($31.3$\,dBm with the slow states at their cold values) and a drain current of $180$\,mA; the power
balance closes in the time domain to $0.05\%$. The slow maps --- dissipated power,
output wave, and drain waveform --- were then tabulated on a grid of drive $\times$
temperature $\times$ trap voltage ($440$ periodic steady states) and interpolated by a
tensor-product cubic spline, in the manner of the table-based Root model~\cite{root1991},
giving an envelope-domain surrogate whose slow equations
integrate in milliseconds; this surrogate is the parameterization of $\Wc(A)$ described
in Section~\ref{sec:kernel}, computed rather than expanded.\footnote{The device
description, the simulator scripts, the tables and the MATLAB routines used in this
section are available from the author.}

\begin{figure*}[!t]
\centering
\includegraphics[width=\textwidth]{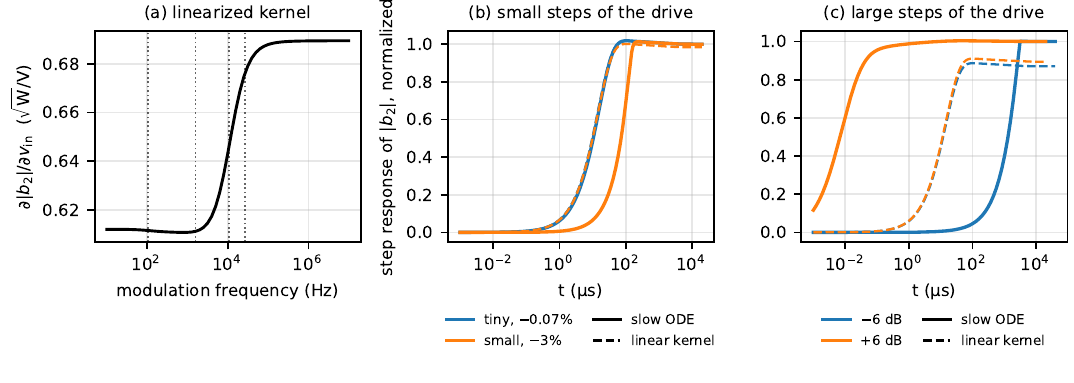}
\caption{Memory-on ASM-HEMT at $A_2=0.3+0.1j$, gate drive $1.5$\,V. (a) Linearized kernel
in the frequency domain: $\partial|b_2|/\partial v_{\rm in}$ against modulation frequency;
dotted lines at $|\lambda_i|/2\pi$ for the four slow exponents. (b), (c) Step responses
of $|b_2|$ on the surrogate slow dynamics (solid) against the linearized kernel (dashed),
each pair normalized to the final value of the slow-dynamics response: a $-0.07\%$ step follows the kernel; $-3\%$ and
$\pm6$\,dB steps do not, because the trap voltage slews instead of relaxing.}
\label{fig:kernel}
\end{figure*}

\subsection{The spectrum}\label{sec:spectrum}

Table~\ref{tab:spectrum} lists the slow Floquet exponents, the eigenvalues of the
Jacobian of the averaged slow equations at the LSOP (finite differences of the dissipated
power and of the drain waveform, with the trap equation differentiated through the
waveform by the chain rule, since it is exponentially sensitive to the waveform peak).
These eigenvalues approximate the slow Floquet exponents of the full stroboscopic map to
leading order in the ratio of slow to electrical rates. The approximation was checked
directly for the fastest slow mode: with the thermal network placed in the circuit and the
trap voltage held at its LSOP value, a transient of the full circuit started from the
LSOP with the channel-temperature state displaced by $2$\,K decays with a time constant
of $6.015\,\mu$s ($24\,500$ RF periods simulated). The transient holds the choke current
fixed, so the DC drain voltage moves with the device current through the $50\,\Omega$
termination, and the averaged equations evaluated under the same condition give
$6.016\,\mu$s: agreement to $0.02\%$. (With the ideal tee of Table~\ref{tab:spectrum}
the mode is at $6.006\,\mu$s and the bare network pole at $6.000\,\mu$s; the loop gain is
$0.013$ with the choke current fixed against $0.005$ with the ideal tee, so the thermal
exponent depends on the DC termination.) The slowest electrical exponent is bounded by
the start-up transient of the same circuit: the deviation of the drain waveform from its periodic steady state falls by five
orders of magnitude within one RF period and reaches the numerical floor within three,
so the slowest electrical rate exceeds $5\times10^{9}\,\mathrm s^{-1}$; the node time
constants of the terminated transistor, tens of picoseconds, put the fast modes at
$10^{10}$--$10^{11}\,\mathrm s^{-1}$.

\begin{table}[!t]
\centering
\caption{Floquet exponents of the memory-on LSOP (real parts; all imaginary parts zero
modulo $j\omega$); electrical row: node-RC estimate, slowest rate bounded directly at
$5\times10^{9}\,\mathrm s^{-1}$.}
\label{tab:spectrum}
\begin{tabular}{lrrl}
\toprule
mode & $\Rel\lambda$ (s$^{-1}$) & $\tau$ & character\\
\midrule
1 & $-665$ & $1504\,\mu$s & thermal (mount) \\
2 & $-9\,974$ & $100.3\,\mu$s & thermal (die) \\
3 & $-68\,706$ & $14.6\,\mu$s & trap, linearized \\
4 & $-166\,510$ & $6.01\,\mu$s & thermal (channel) \\
electrical & $-10^{10}$ to $-10^{11}$ & $10$--$100$\,ps & node RC's \\
\bottomrule
\end{tabular}
\end{table}

The thermal exponents are the open-loop poles of the network to a fraction of a percent;
the electrothermal loop gain $R_{\mathrm{th}}\,\partial\langle P\rangle/\partial T_j$ is
$+0.005$ at this LSOP.\footnote{A first estimate of $+0.009$ by a two-point difference
was corrupted by the tolerance of the bias-tee solve ($10^{-4}$\,A in the choke current is
$1.5$\,mW in dissipated power); the value quoted is consistent between the table
surrogate, the polynomial fit of Section~\ref{sec:polyssm} and a direct difference at a
tightened tolerance.} The trap exponent is not the emission rate $1/6\,\mathrm{ms}$.
Linearized about an RF operating point, the exponential diode pins $V_t$ to the envelope
peak with rate $\langle(V_t-v_d)^{+}\rangle/(nV_T\,\tau_e)\approx9.8\,\mathrm V/(26\,\mathrm{mV}\times6\,\mathrm{ms})=6.3\times10^{4}\,\mathrm s^{-1}$,
within $10\%$ of the $6.9\times10^{4}\,\mathrm s^{-1}$ the Jacobian returns, and it scales with $nV_T$ ($28\,\mu$s at $n=2$,
$6.2\,\mu$s at $n=0.5$). The emission time is a large-signal property of the trap and does
not appear in the spectrum.

\subsection{The hypotheses}

\emph{Stability.} All four slow exponents are real and negative; the fast modes are
damped by the terminations. Assumption~\ref{ass:stable} holds.

\emph{Spectral quotient.} $\Sigma(E)=\Int[10^{10}\text{--}10^{11}/665]\approx1.5\times10^{7}$
to $1.5\times10^{8}$, and at least $8\times10^{6}$ from the measured bound on the electrical
rate.

\emph{Non-resonance of the full set of slow modes.} Condition \eqref{eq:nonres} asks whether
an integer combination of the four slow rates, of order between $2$ and $\Sigma(E)$,
equals an electrical rate exactly. Combinations of order below $10^{5}$ cannot reach the
fast modes; at orders $10^{5}$--$10^{7}$ near-coincidences necessarily occur, but an
exact one is a measure-zero event, and a near one affects only Taylor coefficients of
that order, which are of no practical consequence (Section~\ref{sec:radius}). Assumption~\ref{ass:nonres} holds, and
Theorem~\ref{thm:main} gives a four-dimensional fiber, a five-dimensional bundle $\Wc(A)$,
and reduced dynamics with four states. By Corollary~\ref{cor:embed}, nine delays of a port
waveform sampled once per period suffice to embed the envelope fiber, and eleven suffice
for the bundle when the sampling interval is incommensurate with the period; the
intrinsic dimension a reconstruction should recover is four.

\emph{Internal resonances.} $\Rel\lambda_2/\Rel\lambda_1=15.00$ to the accuracy of the
computation, an exact internal resonance of order 15 inherited from the round time
constants of the thermal fit; $\Rel\lambda_3/\Rel\lambda_2=6.89$ is a small divisor at
order 7. Internal resonances are permitted; their only effect is that the polynomial $R$
of Theorem~\ref{thm:main}(d) cannot be taken linear: it acquires one resonant monomial,
of degree 15, and is otherwise linear; near the fixed point the contracting linear part
dominates, and Section~\ref{sec:polyssm}, point (iv), gives the size of the resonant
term for this device.

\emph{Subsets of the slow modes.} Corollary~\ref{cor:count} applied to the slowest mode alone treats mode
2 as an outside exponent: $m\,\Rel\lambda_1\ne\Rel\lambda_2$ fails at $m=15$, within the
allowed range $2\le m\le\Int[\Rel\lambda_4/\Rel\lambda_1]=250$. A one-state thermal model
of this device is therefore not covered by the theorem as stated. The exact ratio comes
from the round time constants of the thermal fit; a thermal network with measured time
constants would not have an exact integer ratio, and the near-resonance that would
remain affects only a Taylor coefficient of order 15
(Section~\ref{sec:polyssm}, point (iv)).

\subsection{The linearized kernel and its range of validity}\label{sec:radius}

The reduced dynamics on the surrogate was linearized and integrated for envelope steps
(Fig.~\ref{fig:kernel}). The linearized kernel --- the $A_{\mathrm{prev}}\to A$ limit of
\eqref{eq:verspecht}, computed as $Ce^{Ju}B_A$ from the Jacobian $J$ of the reduced
dynamics --- has step weights, for $|b_2|$ per volt of gate drive, of
$+1.2\times10^{-3}$, $+1.2\times10^{-3}$, $-8.0\times10^{-2}$ and $+5\times10^{-4}\ \sqrt{\mathrm W}/\mathrm V$
on the four modes of Table~\ref{tab:spectrum}, on top of a memoryless part of
$0.690\,\sqrt{\mathrm W}/\mathrm V$; the static gain of the kernel, $0.612$, equals the slope
of the output along the fixed-point family computed independently, to four digits, which
verifies the linearization. The modulation-frequency response of $|b_2|$
(Fig.~\ref{fig:kernel}(a)) rises from $0.612$ below $1$\,kHz to $0.690$ above $1$\,MHz:
about $1$\,dB of memory, essentially all of it due to the trap, with a corner near $10$\,kHz.

A step of $-1$\,mV in the drive ($-0.07\%$) follows this kernel to $2.5\%$, and a
four-exponential fit of the transient returns the four time constants of
Table~\ref{tab:spectrum} to four digits (Fig.~\ref{fig:kernel}(b)). A step of $-3\%$ is
already $68\%$ away from the linear prediction, and steps of $\pm6$\,dB are $85$--$96\%$
away (Fig.~\ref{fig:kernel}(c)): the trap voltage does not relax but slews, at the
emission-limited rate $\langle V_t-v_d\rangle/\tau_e\approx1.5$\,V/ms downward
and at the capture-limited rate, about $100$\,ns, upward, while
the thermal modes stay linear throughout. The reason is the diode: the reduced vector
field depends on $V_t$ through $\exp((v_{d,\mathrm{peak}}-V_t)/nV_T)$, so its Taylor
expansion about the LSOP, and with it the polynomial $R$ of Theorem~\ref{thm:main}(d),
is informative only for $|\Delta V_t|\lesssim nV_T$, a few tens of millivolts, or a few
millivolts of drive.

This is the practical content of the theorem for this device. The slow manifold exists,
is unique, has four dimensions and is attracting; a small-modulation kernel measurement
will resolve the four exponents of Table~\ref{tab:spectrum}, with the trap's $14\,\mu$s a
property of the operating point and not of the trap; and at any realistic modulation
depth the kernel \eqref{eq:verspecht} is nonlinear in $A_{\mathrm{prev}}$ because the
reduced dynamics is nonlinear. A model linear in the memory excursion with level-dependent
coefficients, the structure of~\cite{verspecht2023dg}, describes the thermal memory of
this device at every step tested and fails for its trap memory beyond a few millivolts of
drive. The manifold's parameterization
must therefore be represented globally --- by tables, splines or a fitted network, as the
surrogate does --- and not by the normal form the parameterization method produces at the
fixed point.

\subsection{Explicit polynomial reduced models on the thermal submanifolds}
\label{sec:polyssm}

Theorem~\ref{thm:main}(d) guarantees a polynomial reduced dynamics, and the
parameterization method shows how to construct it. It was carried out for the two
submanifolds of the slow dynamics that a thermal-memory model would use: the
two-dimensional manifold tangent to the $1.5$\,ms and $100\,\mu$s modes and the
one-dimensional manifold tangent to the slowest mode alone. Because the spline surrogate is
only $C^{2}$, its Taylor coefficients beyond order three are artifacts; the slow vector
field was therefore re-fitted as a degree-four polynomial in $(T_j,V_t)$ to $49$ periodic
steady states on a Chebyshev box of $\pm12$\,K and $\pm0.12$\,V about the LSOP (dissipated
power to $10^{-4}$\,W rms; the drain waveform sample by sample), with the trap's softplus
gating kept exact and expanded as a power series about the fixed point, giving the Taylor
polynomial of the slow field to order seven. Its Jacobian reproduces
Table~\ref{tab:spectrum} ($1504$, $100.3$, $14.2$, $6.01\,\mu$s; $\lambda_2/\lambda_1=14.999$,
$\lambda_3/\lambda_2=7.07$); the $3\%$ difference on the trap mode arises because its rate
is exponentially sensitive to the waveform peak, which the two Jacobians differentiate
differently (chain rule through the spline waveform against the polynomial fit). The invariance equation $DW(\xi)R(\xi)=f(W(\xi))$ was then solved
order by order in normal-form style, with $R$ kept linear unless an inner resonance
appears (Table~\ref{tab:ssm}).

\begin{table}[!t]
\centering
\caption{Parameterization of the thermal submanifolds, order by order: largest
coefficient of $W$ at that order (eigen-coordinates, units of kelvin) and the smallest
relative divisor $|\langle k,\lambda_E\rangle-\lambda_i|/|\lambda_i|$ with the mode $i$ at
which it occurs.}
\label{tab:ssm}
\setlength{\tabcolsep}{2.5pt}\footnotesize
\begin{tabular}{crlcrl}
\toprule
\multicolumn{3}{c}{2-D manifold, modes 1,2} & \multicolumn{3}{c}{1-D manifold, mode 1}\\
\cmidrule(r){1-3}\cmidrule(l){4-6}
order & $\max|W_k|$ & divisor (mode) & order & $\max|W_k|$ & divisor (mode)\\
\midrule
2 & $6.6\times10^{-4}$ & $0.067$ (2) & 2 & $2.4\times10^{-5}$ & $0.87$ (2)\\
3 & $9.9\times10^{-6}$ & $0.13$ (2)  & 5 & $1.4\times10^{-10}$ & $0.67$ (2)\\
4 & $7.9\times10^{-7}$ & $0.20$ (2)  & 10 & $3.6\times10^{-17}$ & $0.33$ (2)\\
5 & $6.7\times10^{-9}$ & $0.27$ (2)  & 14 & $5.3\times10^{-23}$ & $0.067$ (2)\\
6 & $1.4\times10^{-9}$ & $0.15$ (3)  & 15 & $1.7\times10^{-21}$ & $7.2\times10^{-5}$ (2)\\
7 & $3.5\times10^{-9}$ & $0.010$ (3) & 16 & $8.0\times10^{-25}$ & $0.067$ (2)\\
\bottomrule
\end{tabular}
\end{table}

\begin{figure*}[!t]
\centering
\includegraphics[width=\textwidth]{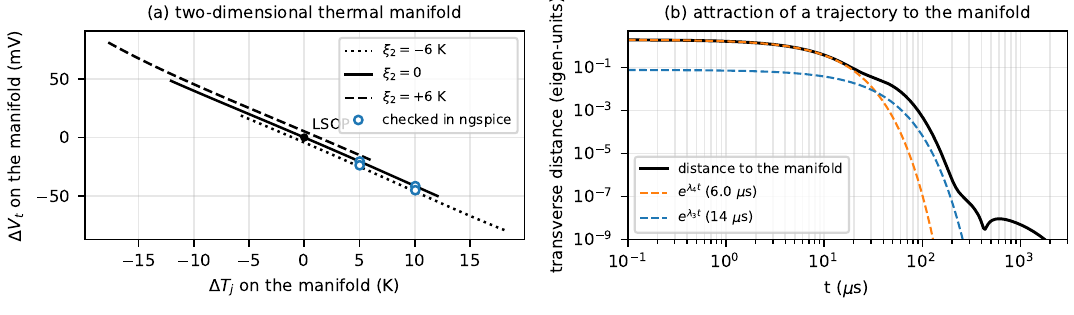}
\caption{The two-dimensional thermal manifold of Table~\ref{tab:ssm}. (a) Trap-voltage
excursion against junction-temperature excursion along the manifold: nearly a plane, with
the trap slaved to the temperature at $-4$\,mV/K. Circles: points at which ngspice steady
states confirm the tangency of the true vector field. (b) Transverse distance from the
manifold of a trajectory of the four-dimensional slow system started $2$ eigen-units off
it, with the two transverse decay rates for comparison; the floor near $10^{-8}$ is the
truncation of the order-seven parameterization.}
\label{fig:ssm}
\end{figure*}

Four points can be made about Table~\ref{tab:ssm} and Fig.~\ref{fig:ssm}. (i) The two-dimensional manifold is nearly flat and its reduced
dynamics is linear to order seven: no inner resonance occurs, the coefficients of $W$
fall from $7\times10^{-4}$ at order two to $10^{-9}$ at order seven, the invariance
residual on the polynomial field is $3\times10^{-5}$ of $|f|$ at a $10$\,K excursion, and
the manifold's trap coordinate follows the temperature at $-4$\,mV/K --- the slaving of the
trap to the thermal state (Fig.~\ref{fig:ssm}(a)). The order-7 small divisor of $1.0\times10^{-2}$ against mode 3
is exactly the near-resonance $7\lambda_2\approx\lambda_3$ noted above, and it has no practical effect:
the coefficient it amplifies is $3\times10^{-9}$. (ii) The manifold is invariant for the
true device, not only for the polynomial: at four points on it with excursions of $5$--$10$\,K,
periodic steady states computed in ngspice give a vector field whose tangential components
agree with $DW\,R$ to $0.2\%$ and whose transverse components, divided by the transverse
rates, correspond to displacements of $10^{-3}$--$10^{-2}$ eigen-units, the numerical noise
floor of the simulation. (iii) The manifold attracts: a trajectory of the four-dimensional
slow system started $2$ eigen-units off it is $5\times10^{-4}$ away at $100\,\mu$s and
$5\times10^{-8}$ at $300\,\mu$s, the fast modes having died at their own rates
(Fig.~\ref{fig:ssm}(b)), after which it moves along the manifold. (iv) The one-dimensional manifold exhibits the order-15
resonance: the divisor $|15\lambda_1-\lambda_2|/|\lambda_2|$ is $7\times10^{-5}$, the
coefficient at that order is $30$ times the one at order 14, against a trend that falls by a
factor of 20--30 per order, and the theorem's hypothesis is indeed violated for this subset of the modes; but the coefficient is $10^{-21}$ in
kelvin per kelvin$^{15}$, which contributes $10^{-5}$\,K at the $12$\,K edge of the
expansion box; the resonant term becomes comparable to the linear one only near a $30$\,K
excursion, beyond the range in which the local expansion has any validity.

The trap direction behaves differently. The same order-seven Taylor field, read
along $V_t$ at fixed temperature, reproduces the exact trap equation to $10^{-3}$ at
$\pm50$\,mV, to $3\%$ at $\pm80$\,mV and fails at $\pm120$\,mV ($37\%$): the radius of the
expansion in the trap direction is two to three times $nV_T$, as anticipated. The thermal
submanifolds are well behaved because they lie along directions in which the device is
nearly linear and the trap is merely slaved; the nonlinearity that limits the polynomial
description is transverse to them. A polynomial reduced model is therefore adequate for
thermal memory at this operating point and inadequate for trap memory, which is where the
global representation of Section~\ref{sec:radius} is needed.

\subsection{Dependence on the operating point}
\label{sec:opdep}

Everything above is at one operating point. Table~\ref{tab:opdep} gives the operating
point, the spectrum and the linearized kernel at five further operating points: four
other drives at the same bias, obtained from the same tables, and the same drive at a
second gate bias ($V_{gs}=-2.9$\,V, $I_{dq}\approx40$\,mA, deeper class AB), obtained by
a fresh two-timescale computation. The three thermal exponents
are the network's own poles at every point, moving by less than $0.4\%$, because the
electrothermal loop gain is at most $0.009$ and changes sign at low drive. The trap
exponent is the one that moves: from $25\,\mu$s at $0.75$\,V to $13\,\mu$s at $1.75$\,V,
following $\langle V_t-v_d\rangle/(nV_T\tau_e)$ as the trap voltage rises with the
envelope peak, and to $16\,\mu$s at the deeper bias, where the peak is lower. The trap
accounts for nearly all of the memory at every operating point: its kernel weight is
$30$--$200$ times each thermal weight, and the memory fraction of the small-signal gain, one minus the ratio of static
to instantaneous slope, is $10$--$13\%$ until the device saturates at $1.75$\,V, where the
instantaneous slope itself falls by a factor of two. The required model order is therefore
four at every
operating point examined, the thermal part of the model is the same at all of them, and
what an operating-point-dependent memory model must track is the trap exponent and the
trap weight.

\begin{table}[!t]
\centering
\caption{Operating-point dependence: gate drive $v_{\rm in}$ (EMF), junction-temperature
rise, trap voltage, output power, trap Floquet time constant $\tau_3$ (the thermal ones
stay at $1501$--$1506$, $100.1$--$100.4$ and $6.00\,\mu$s), electrothermal loop gain,
memoryless slope $D=\partial|b_2|/\partial v_{\rm in}$, trap kernel weight $w_3$ and
static slope, all in $\sqrt{\rm W}$/V. Last row ($^{\ast}$): $V_{gs}=-2.9$\,V; kernel not
computed, the tables being for $V_{gs}=-2.7$\,V.}
\label{tab:opdep}
\setlength{\tabcolsep}{2.2pt}\scriptsize
\begin{tabular}{crrrrrrrr}
\toprule
$v_{\rm in}$ & $\Delta T_j$ & $V_t$ & $P_{\rm out}$ & $\tau_3$ & $R_{\rm th}P_T$ & $D$ & $w_3$ & static\\
(V) & (K) & (V) & (dBm) & ($\mu$s) & & & & \\
\midrule
0.75 & 31.5 & 20.89 & 23.7 & 25.2 & $-0.001$ & 0.762 & $-0.098$ & 0.662\\
1.00 & 32.1 & 22.29 & 26.5 & 19.9 & $+0.000$ & 0.734 & $-0.095$ & 0.639\\
1.25 & 31.7 & 23.59 & 28.5 & 16.7 & $+0.003$ & 0.711 & $-0.088$ & 0.624\\
1.50 & 30.4 & 24.80 & 30.0 & 14.6 & $+0.005$ & 0.690 & $-0.080$ & 0.612\\
1.75 & 28.2 & 25.96 & 31.3 & 13.0 & $+0.009$ & 0.330 & $-0.036$ & 0.297\\
\midrule
$1.50^{\ast}$ & 27.5 & 24.00 & 28.2 & 16.0 & $+0.004$ & --- & --- & ---\\
\bottomrule
\end{tabular}
\end{table}

\section{Discussion}

A device with memory at
a large-signal operating point has a unique, attracting, finite-dimensional invariant
manifold carrying its long-term memory; the fiber dimension is the number of slow Floquet
exponents, which is the number of states an exact envelope-domain model needs and the
intrinsic dimension a time-domain model reconstructs; the memoryless model is the
fixed-point family on it (the LSOPs, one periodic orbit per drive); the dynamic X-parameter kernel is its reduced flow between fixed
points; and the whole object carries the circle symmetry of the memoryless theory.

Memory, usually described dynamically, by a kernel or by a fitted map, is thus also an
object of geometry: a manifold fixed by the operating point, on which the motion is slow
compared with the electrical rates. The memoryless device is formally the case $q=0$, for which
the manifold is the periodic orbit alone; memory adds to that orbit a fiber of dimension
$q$ at each drive phase, transverse to it, and changes nothing else in the construction.

Four limitations of Theorem~\ref{thm:main} should be mentioned. The theorem does not give the size
of the neighborhood on which the manifold is invariant, nor of the domain of its
parameterization; for the ASM-HEMT of Section~\ref{sec:example} the domain of the
parameterization is set by the exponential diode gating of the drain-lag trap and
extends only a few thermal voltages, $nV_T\approx26$\,mV,
in the trap-voltage direction (Section~\ref{sec:radius}). The theorem is stated at a
fixed drive; a modulated drive is covered only by Proposition~\ref{prop:modulated}, which
holds to first order in the ratio of the envelope rate to the electrical rates.
Quasi-periodic forcing, for which the operating point is an invariant torus rather than a
periodic orbit, will be the subject of a later study. The theorem singles out the manifold by its smoothness class, but the other invariant
manifolds tangent to $E$ differ from it by terms of order $|\xi|^{\Sigma+1}$, far below
what measured data could resolve; in practice the manifold is singled out by attraction,
since every trajectory reaches it at the fast rates. The theorem also assumes the model is analytic
on the operating region, true of compact models by construction but not of every fitted
behavioral model.

The natural next step is a measurement on a physical device: the small-modulation
kernel of a GaN HEMT at a well-characterized operating point. Modulated-drive network
analyzers~\cite{verspecht2022vca} measure calibrated incident and scattered phasors under a
periodic multitone envelope with coherent averaging, which is what a small-modulation
measurement needs, provided the envelope period is long compared with the slowest time
constant to be resolved; the periods of $0.3$--$30\,\mu$s used in~\cite{verspecht2022vca}
to maximize dynamic range are shorter than the two slowest time constants of
Table~\ref{tab:spectrum}. The measurement would test
two predictions: that the rates measured are the slow Floquet exponents, and that the
trap rate moves with drive and bias as Table~\ref{tab:opdep} says rather than remaining
fixed at the trap's emission time constant.

Another direction for further study is computational: whether the slow-manifold picture
can be used to construct more efficient simulation algorithms or test protocols. The
construction of Section~\ref{sec:example} is a model-order reduction from a compact model
to an envelope model with a guaranteed number of states, and each of its steps, the LSOP
with the slow states pinned, the slow Jacobian and its Floquet exponents, and the
tabulation of the slow maps, is a routine a circuit simulator already performs. An
envelope-transient simulation carries the full circuit state and solves a
harmonic-balance problem at every envelope sample; the reduced model integrates $q$
ordinary differential equations with tabulated right-hand sides. On the measurement
side, the same picture fixes what a test of a device with memory should record: the slow
Floquet exponents at each operating point, from a small-modulation kernel, and, for a
time-domain model, a number of delays set by Corollary~\ref{cor:embed}. Two questions
decide whether the reduction can be automated as a simulator feature. The first is the
growth of the tabulation grid with the number of envelope inputs (both incident waves,
bias and load). The second is the treatment of electrical memory: the bias-network and
video-bandwidth modes have rates between the thermal and the electrical ones and must be
assigned to $E$ once the envelope bandwidth reaches them. When the bandwidth overlaps the
slow rates, the drive lies outside Proposition~\ref{prop:modulated}. Both questions concern the size
of the object to be built rather than its existence: Theorem~\ref{thm:main} settles the
existence, uniqueness and dimension of the slow manifold, and what remains is the cost of
representing it over the operating range.

\section*{Acknowledgment}

This paper is in remembrance of David Broomhead (1950--2014), a wizard at applying pure
mathematical results to engineering applications.

Simulation scripts and the numerical checks in this paper were prepared with the
assistance of Claude (Anthropic) under the author's direction, as well as grammatical
checks of the text; the author verified the results and is responsible for the content.


\end{document}